\newcommand{\be}{\begin{equation}}
\newcommand{\ee}{\end{equation}}
\newcommand{\bc}{\begin{center}}
\newcommand{\ec}{\end{center}}
\newcommand{\bea}{\begin{eqnarray}}
\newcommand{\eea}{\end{eqnarray}}
\newcommand{\ba}{\begin{array}}
\newcommand{\ea}{\end{array}}
\newtheorem{theo}{Theorem }
\newenvironment{proof}[1][Proof  ]{\begin{trivlist}
\item[\hskip \labelsep {\bfseries #1}]}{\end{trivlist}}
\newcommand{\qed}{\nobreak \ifvmode \relax \else
      \ifdim\lastskip<1.5em \hskip-\lastskip
      \hskip1.5em plus0em minus0.5em \fi \nobreak
      \vrule height0.75em width0.5em depth0.25em\fi}
\begin{document}
\title{\large LOCALIZED QUANTUM WALKS AS SECURED QUANTUM MEMORY}

\author{C. M. Chandrashekar \footnote{Current Address : The Institute of Mathematical Sciences,\\ CIT Campus - Taramani, Chennai, India}}
\email{chandru@imsc.res.in}
\author{Th.~Busch}
\affiliation{Quantum Systems Unit, Okinawa Institute of Science and Technology Graduate University, Okinawa 904-0495, Japan}



\begin{abstract}
We show that a quantum walk process can be used to construct and secure quantum memory. More precisely, we show that a localized quantum walk with temporal disorder can be engineered to store the information of a single, unknown qubit on a compact position space and faithfully recover it on demand. Since the localization occurss with a finite spread in position space, the stored information of the qubit will be naturally secured from the simple eavesdropper. Our protocol can be adopted to any quantum system for which experimental control over quantum walk dynamics can be achieved.
\end{abstract}

\pacs{PACS numbers: 03.67.Ac, 05.60.Gg, 03.65.Wj}

\maketitle


\section{Introduction}
\label{intro}
Quantum memories are strategies that allow to store and retrieve the state of an unknown quantum bit faithfully.  While the physics describing the necessary conversion processes is often interesting from a fundamental point of view, the processes themselves are of importance for applications in quantum information and communication. Various strategies have been considered to store and retrieve quantum states of light\,\cite{MHN97, CMJ05, Eis05, AFK08, CDL08, LKF09, HLL10} and impressive progress has been made in the field of ensemble-based quantum memories\,\cite{Ima09, WAB09, MWT10, YYH11, LST09, JPB12, NJP13}. However, the quest for simple, long-lived, secured and system independent  protocols for quantum memory continues and here we suggest that a discrete-time quantum walk with temporal disorder can be engineered to compactly store single qubits in position space and recover them on demand. Since the localization occurs with a finite spread in position space, the stored information will be naturally secured from an eavesdropper who does not have access to the complete position space. The storage time will be directly related to the number of implementable steps of the walk.

Discrete-time quantum walks\,\cite{Ria58+,ADZ93}, described by a quantum coin operation followed by a shift operation, evolve a localized quantum state into a coherent superposition of different locations in position space.  They are known to have algorithmic applications, which allow to solve a number of problems more efficiently than purely classical approaches\,\cite{Amb03}. Furthermore, it has been shown that they can be used to implement universal quantum computation\,\cite{Chi09, LCE10} or construct a generalized measurement device\,\cite{KW13}. Recently astonishing experimental progress in controlling the dynamics of single quantum states has led to implementations of quantum walks in the NMR system, ions, photons, and atoms\,\cite{DLX03+, K09}. Investigating the possibilities of using quantum walks as constituent in quantum computers and for other fundamentally important quantum information processing and communication applications is therefore a promising challenge.  One such use we present here is a quantum memory which is a must for any quantum computing device\,\cite{PK11}.

At a first look a dynamical process like a quantum walk appears to be an unlikely candidate for a quantum memory, since it results in nontrivial quantum correlations between the particle (qubit) and the position space. While this alters the state of the qubit as a function of time, we present in the following a careful analysis of the dynamics and show that the stored information can be perfectly recovered  at specific  times $t$. These times are periodic and a function of the coin parameter $\theta$, used for evolving the walk. Due to the spatial spread of the qubit in position space, the information stored will also acquire an inherent level of security from an eavesdropper.

Though in principle this model describes a fully functioning quantum memory, the dependence of the recovery time $t$ on $\theta$ poses an unwelcome and limiting restriction. Additionally, the size of the position space required to store the information of the qubit increases linearly over time, thus adding an experimental challenge for achieving long storage time. Surprising as it seems, both these issues can be overcome.

To enable the recovery of the information at any time and independently of the choice of $\theta$, we show that the presented protocol can be amended by using a Hadamard operation to encode and decode the initial state of the qubit before and after the walk, respectively. To limit the required size in position space  we propose to use one of the remarkable effects of quantum mechanics, that is, localization of a quantum state (particle) in presence of disorder\,\cite{And58, LR85}. 
This phenomenon, commonly known as Anderson localization, is usually discussed in terms of coherent evolution (e.g. a quantum walk) in the presence of a disordered medium. However, in a quantum walk disorder can be introduced in ways other than through the medium alone\,\cite{BCA03, SBB03, RAS05, OKA05, JM10, YKE08, Kon10+, AVW11, Joy11, Cha11a, Cha12+}. By breaking the periodicity of the evolution through randomizing the operations which determine the dynamics of the system, the effect of a random medium can be mimicked and localization around the origin achieved\,\cite{JM10, Cha11a, Cha12+}. Such disordering operations can be designed to cause spatial disorder, temporal disorder or spatio-temporal disorder by choosing appropriate quantum coin operations at different positions in space, time, or both, respectively. Among the three forms of disorder-creating evolutions, symmetry  is known to be preserved by the temporal disorder of the particular form presented in Ref.\,\cite{Cha12+} and we show that the corresponding localization ensures that the information of the qubit can be stored compactly in position space. 

\begin{figure}[tb]
\bc 
\includegraphics[width=8.3cm]{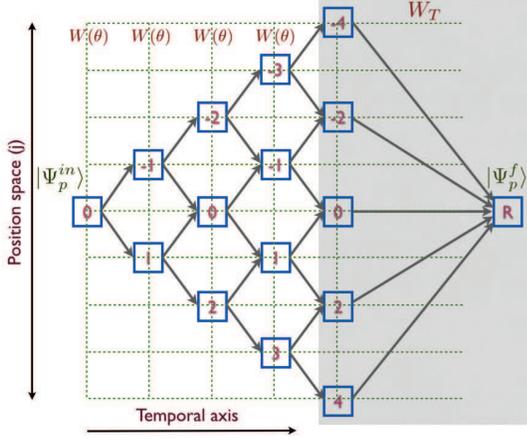} 
\ec
\vskip -0.3cm
\caption{\footnotesize{{\bf Schematic for the discrete-time quantum walk as quantum memory.} The qubit, initially in state $|\Psi_p^{in}\rangle$ at position $j=0$, undergoes quantum walk evolution for a time $t$ before it is retrieved.  The spread in position space during the evolution allows to secure the information of the qubit from an eavesdropper. \label{fig:1}}}
\end{figure}

\section{Discrete-time quantum walks as quantum memory} 
\label{QWL}

We consider a two-state particle initially in the state $|\Psi^{in}_p\rangle$ and located at the origin.  Using the basis states $|0\rangle=\begin{bmatrix} 1  \\ 0 \end{bmatrix}$ and $| 1 \rangle=\begin{bmatrix} 0  \\ 1 \end{bmatrix}$ we can write $|\Psi_{in}\rangle= \Big [ \cos(\delta)| 0 \rangle + e^{i\eta}\sin(\delta)|1 \rangle \Big ]\otimes | j=0\rangle$, where $j \in {\mathbbm  I}$ labels the position Hilbert space. Each step of the quantum walk consists of a coin operation 
\bea
\label{coinOP}
     B (\theta)\equiv  
     \begin{bmatrix}
     \mbox{~~~}\cos(\theta)      &    -i \sin(\theta)\\
      -i\sin(\theta)  &  \mbox{~~~}\cos(\theta) 
     \end{bmatrix},
\eea
which evolves the particle into a superposition of its basis states, followed by the shift operation $S \equiv     \sum_j \Big [  |0 \rangle\langle
0|\otimes|j-1 \rangle\langle   j|   +  | 1 \rangle\langle
1|\otimes |j+1 \rangle\langle j| \Big ]$.
This translates the internal state into a superposition in position space
\be
 \label{Wop}
 W(\theta)\equiv S  \Big [ B (\theta) \otimes  {\mathbbm 1} \Big ].
\ee
The state after time $t$, with unit time for each step, is then given by
\bea
   |\Psi_t\rangle=[W (\theta)]^t|\Psi_{\it in}\rangle = \sum_j \left[ \alpha_{j,t} |0 \rangle + \beta_{j,t} |1\rangle\right]\otimes|j\rangle, 
\eea
where
\begin{align}
  \label{eq:comp}
  \alpha_{j,t} &= \cos(\theta)\alpha_{j+1,t-1} -i\sin(\theta)\beta_{j+1, t-1} \\
  \label{eq:5}
    \beta_{j,t} &= \cos(\theta)\beta_{j-1,t-1} - i \sin(\theta)\alpha_{j-1,t-1}.
\end{align}
To retrieve the quantum state at the end of the walk (i.e.~to read out the memory, all parts have to be collected to a single vertex in position space, which we label $R$ (see Fig.~\ref{fig:1}). As a physical process this can be done by connecting all nodes $j$ occupied at time $t$ to the final vertex $R$ and let all parts of the wavefunction interfere. Mathematically this collection process, $W_T$, can be described using the collection operators $C^0_{j,R}$ and $C^1_{j,R}$, which shift the states $|0\rangle$ and $|1\rangle$, respectively.  Its explicit form is given by 
\bea
\label{transfer}
W_{T} = \sum_{j^{\prime} =-t}^{t, R}\sum_{j =-t}^{t, R} C^1_{j^{\prime},R} C^0_{j, R},
\eea
where
\begin{align}
C^0_{j,R} & = |0\rangle \langle 0| \otimes |R\rangle \langle j | + |1\rangle \langle 1| \otimes |j\rangle \langle j | ,  \\ 
C^1_{j^{\prime}, R} &=  |0\rangle \langle 0| \otimes |j^{\prime}\rangle \langle j^{\prime} | + |1\rangle \langle 1| \otimes |R \rangle \langle j^{\prime} |.
 \end{align}
These operators, $C^{0}_{j, R}$ and $C^{1}_{j^{\prime}, R}$ are of the same form as the shift operators recently suggested for a directed discrete-time quantum walk\,\cite{HM09}, and also correspond to the operators used for translation of basis state in the experimental realization of a topological quantum walk\,\cite{KBF12} and for spin-dependent transport of atoms in an optical lattice\,\cite{MGW03}. 
The operation $W_T$ therefore ensures a shift of both basis states, $|0\rangle$ and $|1\rangle$,  from all spatial positions $j$ to $R$, which looses the information of the position from which the individual parts of the wavefunction come from and results in interference at $R$.  While in general interference during the quantum walk and at position $R$ can prevent the reconstruction of the initially stored information of the qubit, we present below a careful analysis to determine the conditions under which this interference is constructive. 

The final state retrieved after time $t$ at vertex $R$ is therefore given by
\bea
|\Psi^{f}_p(t)\rangle \otimes |R\rangle = W_{T} [W(\theta)]^t \Big [ |\Psi^{in}_p\rangle \otimes | j= 0 \rangle \Big ]
\label{trans}
\eea
where 
\bea
|\Psi^{f}_p(t)\rangle = \sum_{j=-t}^{t} \Big [ \alpha_{j, t} |0\rangle + \beta_{j,t}|1\rangle \Big ],
\eea
and if  $|\Psi^{f}_p(t)\rangle \equiv |\Psi^{in}_p\rangle$, the quantum walk can in principle be used as quantum memory.

\begin{figure}[tb]
\bc 
\includegraphics[width=8.6cm]{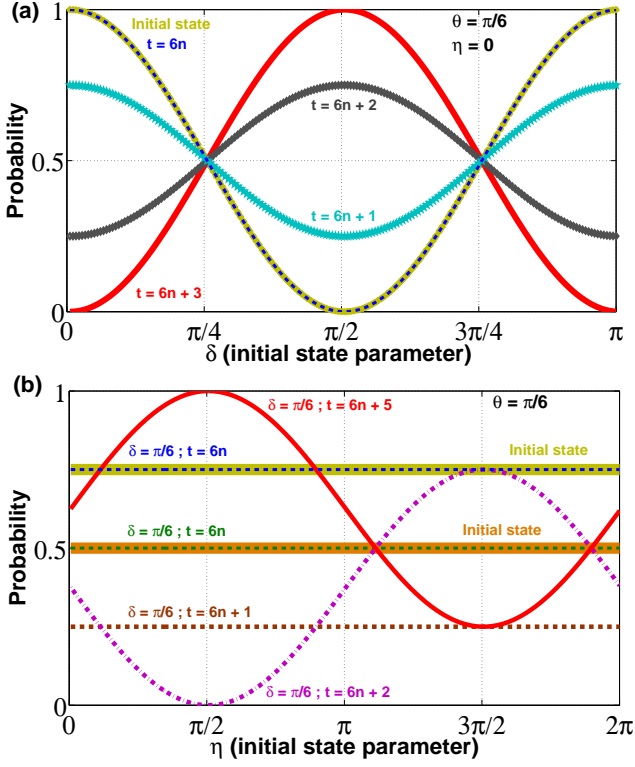} 
\ec
\caption{\footnotesize{
{\bf Probability of finding the basis state $|0\rangle$ in the state retrieved after a quantum walk with $t$ steps.} 
For the qubit with initial state $|\Psi_p^{in}\rangle = \cos(\delta) |0\rangle + e^{-i\eta}\sin(\delta)|1\rangle$ the probability of finding state $|0\rangle$ after retrieving the information at time $t$ (final state) is shown as function of  $\delta$ when $\eta =0$ in (a) and as function of $\eta$ for $\delta =\pi/6$ and $\pi/4$ in (b).  The value of $\theta=\pi/6$ for both (a) and (b). The initial probabilities ($t=0$) are indicated by the thick green and orange lines.  Both figures clearly show that the initial and final state are identical if either $t=n\pi/\theta$ or if $\delta =\pi/4$ or $3\pi/4$ for $\forall~ t$.
\label{fig:2}}}
\end{figure}

\begin{theo}
\label{Th1}
In a discrete-time quantum walk, using Eq.\,(\ref{coinOP}) as the coin operation, the initial state, $|\Psi^{in}_p\rangle= \alpha_0 | 0 \rangle + \beta_0 |1 \rangle$, is related to the final state retrieved after $t$ steps of walk as
\begin{align}
  |\Psi^f_p (t)\rangle   = e^{-i t \theta \cdot \sigma_x} |\Psi^{in}_p\rangle
                                          \end{align}
where $\sigma_x   =  \begin{bmatrix} \begin{array}{clcr}
  0      &     &   1
  \\ 1 & &    0 
\end{array} \end{bmatrix}$.
\end{theo}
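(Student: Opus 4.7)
My plan rests on two structural observations. First, the coin operator itself is a single-qubit rotation: $B(\theta) = \cos\theta\,I - i\sin\theta\,\sigma_x = e^{-i\theta\sigma_x}$, so $[B(\theta)]^t = e^{-it\theta\sigma_x}$ (using $\sigma_x^2 = I$, which makes all exponentials of $\sigma_x$ commute). Second, the collection operator $W_T$ converts the position-entangled state $\sum_j(\alpha_{j,t}|0\rangle + \beta_{j,t}|1\rangle)\otimes|j\rangle$ into $(A_t|0\rangle + B_t|1\rangle)\otimes|R\rangle$, where $A_t \equiv \sum_j \alpha_{j,t}$ and $B_t \equiv \sum_j \beta_{j,t}$. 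Given these, the theorem reduces to proving $(A_t,B_t)^T = [B(\theta)]^t (\alpha_0,\beta_0)^T$.

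To establish the reduction, I would sum both amplitude recursions (\ref{eq:comp}) and (\ref{eq:5}) over all sites $j$. Because the sum is taken over all populated positions, the shifts $j \mapsto j\pm 1$ in the recursion indices are immaterial under the global sum and the recursions collapse to
\begin{align*}
A_t &= \cos\theta\,A_{t-1} - i\sin\theta\,B_{t-1},\\
B_t &= \cos\theta\,B_{t-1} - i\sin\theta\,A_{t-1},
\end{align*}
which is exactly $(A_t,B_t)^T = B(\theta)(A_{t-1},B_{t-1})^T$. Iterating from the initial condition $A_0=\alpha_0$, $B_0=\beta_0$ (the particle starts localized at $j=0$) gives $(A_t,B_t)^T=[B(\theta)]^t(\alpha_0,\beta_0)^T$, and combining with the first observation yields the claimed $|\Psi^f_p(t)\rangle = e^{-it\theta\sigma_x}|\Psi^{in}_p\rangle$. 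Conceptually, this is the statement that the shift $S$ is invisible to the collected state: the spatial spreading during the walk is a bookkeeping device that is undone by $W_T$, leaving the internal degree of freedom evolving purely under the coin.

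The step I expect to require the most care is verifying that the somewhat unusual nested form $W_T = \sum_{j'}\sum_j C^1_{j',R}\,C^0_{j,R}$ really does realize the clean sum-and-deposit-at-$R$ map I assumed above. One must check that $C^0_{j,R}$ moves the $|0\rangle$ component at site $j$ to $R$ while leaving the $|1\rangle$ components in place; that the subsequent $C^1_{j',R}$ then moves the $|1\rangle$ component at site $j'$ to $R$ without acting a second time on the already relocated $|0\rangle$ amplitudes (this is precisely where the distinctness of the auxiliary vertex $R$ from the walk sites $\{-t,\dots,t\}$ is needed); and that the double sum over $j,j'$ reproduces $A_t$ and $B_t$ without double counting. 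Once this bookkeeping is in hand, the remainder of the argument is a one-line calculation.
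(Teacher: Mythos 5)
Your proof is correct and follows essentially the same route as the paper: the paper's ``backward iterative'' expansion of $|\Psi^f_p(t)\rangle$ is precisely your observation that summing the amplitude recursions over all sites makes the $j\mapsto j\pm1$ shifts immaterial (with the vanishing of amplitudes outside $[-(t-1),t-1]$ handling the boundary terms), yielding $|\Psi^f_p(t)\rangle = e^{-i\theta\sigma_x}|\Psi^f_p(t-1)\rangle$ and hence the result by iteration. Your additional scrutiny of whether $W_T$ actually implements the clean sum-and-deposit map at $R$ is a point the paper simply asserts rather than verifies, so that care is welcome but not a departure in method.
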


\begin{proof} 
To show that Theorem\,\ref{Th1} is valid for a walk of any time $t+\tau$, we will use a backward iterative approach.  The state retrieved at the vertex $R$ after time $t$ will be the sum of the states spread across the position space,
\begin{align}
  |\Psi^f_p& (t)\rangle = \sum_{j= -t}^{t}\Big[ \alpha_{j, t} |0\rangle+\beta_{j, t} |1\rangle\Big].
  \end{align}
Since the state at position $j$ at time $t$ is dependent on the state at positions $j\pm1$ at time $t-1$ as given by Eq.\,(\ref{eq:comp}) and Eq.\,(\ref{eq:5}), the final state can be rewritten as
\begin{align}
\label{Eq1Th}
  |\Psi^f_p (t)\rangle & =
        \sum_{j= -t}^{t} \Bigg[\Big ( \cos(\theta) \alpha_{j+1, t-1} -i\sin(\theta) \beta_{j+1, t-1}\Big ) |0\rangle  \nonumber \\
                              &         + \Big ( \cos(\theta) \beta_{j-1, t-1} -i\sin(\theta) \alpha_{j-1, t-1} \Big ) |1\rangle \Bigg ].
\end{align}
Expanding this expression
\begin{align}
\label{eq:13}
  |\Psi^f_p  (t)\rangle  =   \Big[&\cos(\theta) \Big ( \alpha_{-t+1, t-1}+\alpha_{-t+2, t-1} + \cdots \alpha_{t, t-1}+\alpha_{t+1, t-1} \Big )\nonumber\\
 -i   \sin(\theta) \Big ( &\beta_{-t+1, t-1}+\beta_{-t+2, t-1} + \cdots \beta_{t, t-1}+\beta_{t+1, t-1} \Big ) \Big ] | 0\rangle \nonumber \\
         +     \Big[\cos(\theta) \Big ( &\beta_{-t-1, t-1}+\beta_{-t, t-1} + \cdots \beta_{t-2, t-1}+\beta_{t-1, t-1} \Big )\nonumber\\
 -i   \sin(\theta) \Big (& \alpha_{-t-1, t-1}+\alpha_{-t, t-1} + \cdots \alpha_{t-2, t-1}+\alpha_{t-1, t-1} \Big ) \Big ] | 1\rangle, 
\end{align}
and reorganising the terms on the right hand side, it can be written as   
\begin{align}
\label{eq:14}
  |\Psi^f_p (t )\rangle& = \sum_{j= -(t-1)}^{t-1}  
      ( \cos(\theta) \alpha_{j, t-1}  -i\sin(\theta) \beta_{j, t-1}  ) |0\rangle  \nonumber \\
     \quad +\Big[\cos(\theta) &(\alpha_{t, t-1}+\alpha_{t+1, t-1})  -i\sin(\theta)(\beta_{t, t-1}+\beta_{t+1, t-1})\Big]|0\rangle\nonumber\\
   &     +\sum_{j= -(t-1)}^{t-1}  ( \cos(\theta) \beta_{j, t-1}  -i\sin(\theta) \alpha_{j, t-1}   ) |1\rangle  \nonumber \\
              \quad&+\Big [ \cos(\theta) (\beta_{-t-1, t-1} + \beta_{-t, t-1}) \nonumber \\
              &-i \sin(\theta) (\alpha_{-t-1, t-1} + \alpha_{-t, t-1}) \Big]|1\rangle.
        \end{align}
At time $t-1$, the amplitudes $\alpha$ and $\beta$ at position $j < -(t-1)$ and $j > (t-1)$  will be zero, which reduces the preceding expression to 
\begin{align}
\label{eq:15}
  |\Psi^f_p (t)\rangle   &     =  \sum_{j= -(t-1)}^{t-1}  \Big[  \cos(\theta) 
  \left  ( \alpha_{j, t-1} |0\rangle +  \beta_{j, t-1}  |1\rangle \right ) \nonumber \\
& \qquad\qquad\qquad-i \sin(\theta)  \left (  \beta_{j, t-1} |0\rangle +  \alpha_{j, t-1}  |1\rangle \right ) \Big ]   \nonumber  \\
     &     =  e^{-i \theta \cdot \sigma_x} |\Psi^f_p (t-1)\rangle.
                                         \end{align}
Iterating the same process of expressing the amplitudes at position $j$ through the contributing amplitudes from the neighbouring positions $j\pm1$ at the previous time a  further $t-1$ times then leads to
\begin{align}
  |\Psi^f_p (t)\rangle &
           =  e^{-i t \theta \cdot \sigma_x} |\Psi^{in}_p\rangle.
                                              \end{align}
This validates the statement of the Theorem.
\hfill\qed
\end{proof}

By considering the Euler expansion of the last expression
\begin{equation}
   |\Psi^f_p (t)\rangle=\Big(\cos(t\theta) \cdot {\mathbbm 1}  -i \sin(t\theta) \cdot \sigma_x \Big )  |\Psi^{in}_p\rangle,
\end{equation}
a few special cases following from Theorem\,\ref{Th1} can be immediately understood 
\begin{enumerate}
  \item   if $\alpha_0=\pm\beta_0$,\qquad $|\Psi_p^f(t )\rangle = e^{-i\theta t} |\Psi_{p}^{in}\rangle$  \qquad $\forall~t$ 
  \item  if $\alpha_0 \neq \pm \beta_0$ 
   \begin{align}
\qquad |\Psi_p^f(t)\rangle &= (-1)^n |\Psi_{p}^{in}\rangle  &&\text{if}~~~ t  = \pi n/\theta \nonumber \\
\qquad  |\Psi_p^f(t)\rangle &= (-1)^ni \sigma_x |\Psi_{p}^{in}\rangle   &&\text{if}\quad t  = \pi (2n+1)/2\theta,\nonumber 
\end{align}
where $n=\{0, 1, 2, 3, \cdots \}$. 
\end{enumerate}

To better understand the full process of the quantum walk as quantum memory we show in Fig.~\ref{fig:2} the numerically obtained probabilities for finding the basis state $|0\rangle$ as function of $\eta$ and $\delta$ and for a fixed coin value, $\theta=\pi/6$.  The probability of finding state $|1\rangle$ can simply be inferred by realizing that the probabilities are normalized to one. In Fig.~\ref{fig:2}(a)  we have set $\eta=0$ and the thick light green curve shows the probability at $t=0$ (initial state). As expected, the final probabilities for $t=\pi n/\theta=6n$ are identical with the initial ones     (dashed blue curve), whereas for times $t=\pi(2n+1)/2\theta=6n+3$ the probability of finding the state $|0\rangle$ is equal to the probability of finding $|1\rangle$ initially (red curve). For times between these two (shown are $t=6n+1$ and $t=6n+2$), the final state has no easy relation to the initial one, except at $\delta=\pi/4$ and $\delta=3\pi/4$. 

Fig.\,\ref{fig:2}(b) shows the probability of finding state $|0\rangle$ as a function of $\eta$ for two different fixed values of $\delta$. In the initial state this quantity is independent of $\eta$ and this is indicated by the two  thick horizontal lines, light green for $\delta=\pi/6$ and light orange for $\delta=\pi/4$. For both cases one can see that again for $t=\pi n/\theta=6n$ the initial state is perfectly recovered, for $t=\pi(2n+1)/2\theta=6n+3$ the probability of finding $|0\rangle$ is equal to the one of finding $|1\rangle$ initially and for other times no easy relation exists (examples for $\delta=\pi/6$ and $t=6n+2$ and $t=6n+5$ are shown). All these results are consistent with Theorem\,\ref{Th1} and we have found this to be true independently of the choice of coin operation. 

The above model for using a quantum walk as a quantum memory has an additional aspect in that it helps to keep the information stored in the qubit secure against eavesdropper who does not have access to the complete position space. This kind of security is different from the standard encryption models, which nevertheless could also be added on top of the above scheme. However, it has also the inherent limitation of only allowing perfect retrieval at certain times $t$, which depend on the coin parameter $\theta$. Additionally, the size of the position space required to store the information increases linearly with time, making it  experimentally challenging to store it for longer durations. In the following we will show how to overcome both these shortcomings.

\section{ Localised quantum walk as quantum memory}
\label{LQWQM}

Perfect recovery at any time $t$ independently of $\theta$ can be achieved by encoding and decoding the initial and the final state by a Hadamard operation, and introducing temporal disorder to the walk evolution will allow to compactly store the information even for longer times. The latter can be achieved by using a time dependent coin operation $B(\theta_t)$ (temporal disorder) with a different value of $\theta_t$ (uniformly distributed) in the range $-\pi/2 \leq \theta_t \leq \pi/2 $ and randomly chosen for each time, $t$\,\cite{JM10, Cha11a, Cha12+}, 
\be
|\Psi_t \rangle = W(\theta_{t})\dots W(\theta_{3})W(\theta_{2})W(\theta_{1})|\Psi_{in}\rangle.
\ee
The collection process can be described using the same operator $W_T$  as  given in Eq.~\eqref{transfer}. In Fig.\,\ref{fig:3} we show the schematic for using this localized quantum walk with temporal disorder  as quantum memory and although it might seem surprising initially that the disorder does not affect the phases irrevocably, the following theorem shows the possibility of full retrieval of the initial state.
\begin{figure}[tb]
  \includegraphics[width=8.5cm]{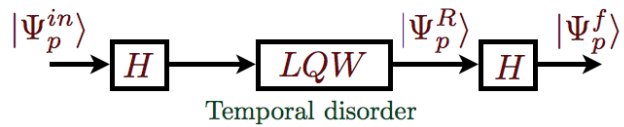} 
 \caption{\footnotesize{{\bf Schematic for the localized quantum walk (LQW) as quantum memory.} 
  With this protocol any arbitrary initial state of the qubit can be securely stored and retrieved at any desired time. \label{fig:3}}}
\end{figure}


\begin{theo}
\label{Th2} 

A discrete-time quantum walk using temporal disorder during the evolution can compactly store any arbitrary initial state $|\Psi^{in}_p\rangle= \alpha_0 | 0 \rangle + \beta_0 |1 \rangle$ encoded using a Hadamard operation $H =\frac{1}{\sqrt 2}\begin{bmatrix}  
  1      &      ~~ 1
  \\ 1 &     -1 
\end{bmatrix}$. At any time $t$ the retrieved state can be decoded using $H$ to recover the initial state,
\bea
|\Psi_p^{f} (t  )\rangle = \begin{bmatrix}  
  ~e^{-i \Theta }     &       0
  \\ 0 &     ~e^{i \Theta} 
\end{bmatrix} |\Psi_p^{in}\rangle,
\eea
where $\Theta = \sum_{k=1}^{t} \theta_k$ and $-\frac{\pi}{2} \leq \theta_k \leq \frac{\pi}{2}$.

\end{theo}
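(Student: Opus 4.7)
The plan is to reduce Theorem~\ref{Th2} to a time-dependent generalisation of Theorem~\ref{Th1} and then dress the resulting operator with the Hadamard encode/decode pair. The central observation is that the backward-iteration argument proving Theorem~\ref{Th1} invokes only the coin parameter at the step being peeled off, so allowing the coin to change in time does not change the structure of any single step.

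Concretely, I would first revisit the recursion relations~(\ref{eq:comp})--(\ref{eq:5}) with $\theta$ replaced by $\theta_t$ at step $t$, and rerun the rearrangement from~(\ref{Eq1Th}) through~(\ref{eq:15}) essentially verbatim. This yields the one-step identity
\begin{equation}
|\Psi^f_p(t)\rangle = e^{-i\theta_t \sigma_x}|\Psi^f_p(t-1)\rangle,
\end{equation}
where $|\Psi^f_p(t-1)\rangle$ denotes the state one would retrieve by halting the walk one step earlier. Since the one-parameter family $\{e^{-i\theta\sigma_x}\}$ is abelian, iterating over $t$ steps with the Hadamard-encoded state $H|\Psi^{in}_p\rangle$ stored at $j=0$ produces $e^{-i\Theta\sigma_x}H|\Psi^{in}_p\rangle$ at the retrieval vertex $R$ before decoding, with $\Theta=\sum_{k=1}^{t}\theta_k$.

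Applying $H$ for decoding and using the standard identity $H\sigma_x H=\sigma_z$ then gives
\begin{equation}
|\Psi^f_p(t)\rangle = H\, e^{-i\Theta\sigma_x}\,H\,|\Psi^{in}_p\rangle = e^{-i\Theta\sigma_z}|\Psi^{in}_p\rangle,
\end{equation}
which is exactly the diagonal matrix $\mathrm{diag}(e^{-i\Theta},e^{i\Theta})$ appearing in the statement. The main point requiring care is the claim that the backward iteration of Theorem~\ref{Th1} survives arbitrary temporal disorder, but this is immediate: each peel-off only references the coin at the current step and the amplitudes at the previous step, so all earlier coin choices are absorbed silently into those amplitudes. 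The mutual commutativity of $\{e^{-i\theta_k\sigma_x}\}_{k}$ then removes any ordering complication, and the remainder is routine algebra with the Hadamard.
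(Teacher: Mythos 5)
Your proposal is correct and follows essentially the same route as the paper: reduce to the one-step peel-off identity $|\Psi^R_p(t)\rangle = e^{-i\theta_t\sigma_x}|\Psi^R_p(t-1)\rangle$ (valid because each backward iteration only involves the current coin parameter), iterate to obtain $e^{-i\Theta\sigma_x}$ acting on the Hadamard-encoded state, and then decode. The only difference is cosmetic: where the paper expands $e^{-i\Theta\sigma_x}$ in components and regroups into the $\tfrac{1}{\sqrt{2}}(|0\rangle\pm|1\rangle)$ basis, you invoke $H\sigma_x H=\sigma_z$ to land directly on $\mathrm{diag}(e^{-i\Theta},e^{i\Theta})$, which is a slightly cleaner way of doing the same algebra.
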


\begin{figure}[tb]
\bc 
\includegraphics[width=8.8cm]{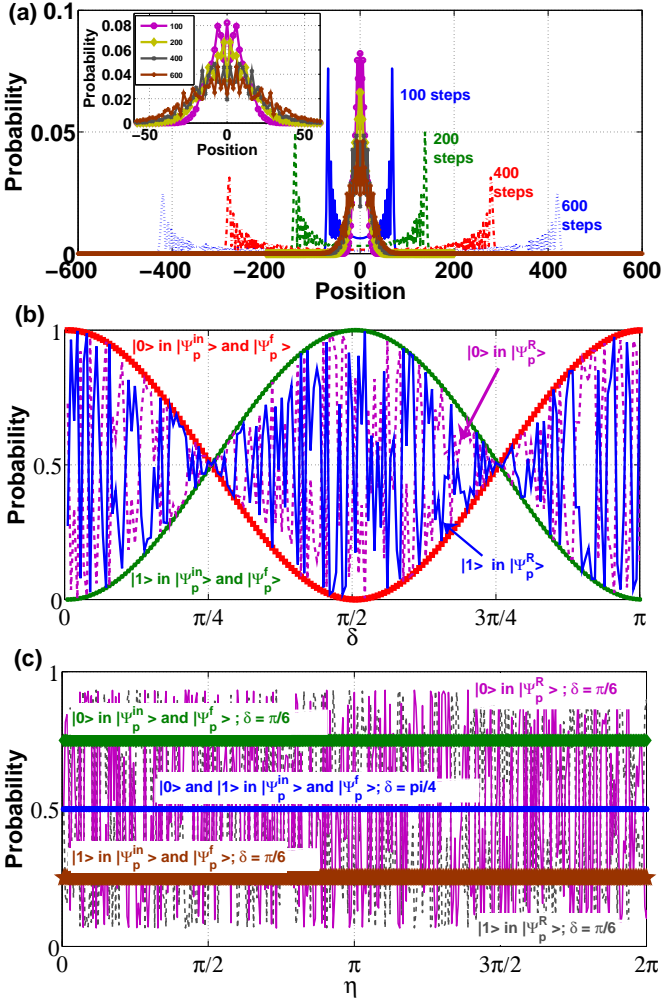} 
\ec
\vskip -0.7cm
\caption{\footnotesize{{\bf Localized quantum walk in position space and the probabilities of the basis states in the initial and the final state.} 
(a) Spatial distribution of the qubit with temporal disordered evolution (full lines) vs.~ordered evolution (broken lines). The inset shows a closeup of the localized part. (b) and (c)  Probability of finding the basis states $|0\rangle$ and $|1\rangle$ in the retrieved state $|\Psi_p^{R}\rangle$ (fast oscillating curves) and the decoded state $|\Psi_p^{f}\rangle$ (slowly oscillating curves) for any value of $t$ and as function of  $\delta$ when $\eta =0$ in (b) and as function of $\eta$ for different value of $\delta$ in (c). After decoding, a perfect retrieval of the initial probabilities is obtained}.\label{fig:4}}
\end{figure}

 \begin{proof} 
We will show the validity of Theorem\,\ref{Th2} in this proof. The complete evolution process is described as,
\begin{align}
  & |\Psi^f_p(t)\rangle \otimes |R\rangle = \Big [ H \otimes  {\mathbbm 1} \Big ]|\Psi^{R}_p(t)\rangle \otimes |R\rangle   
     \nonumber \\
  &=\Big [ H \otimes  {\mathbbm 1} \Big ]W_T \Big [W(\theta_t)   \cdots W(\theta_1)\Big ] |\Psi^{in}_p\rangle_E
 \otimes | j = 0 \rangle,
\end{align}
where $|\Psi^{R}_p(t)\rangle$ is the state retrieved at vertex $R$ after time $t$ before decoding and 
\begin{align}
|\Psi_p^{in} \rangle_E =  H  |\Psi_p^{in} \rangle = \frac{(\alpha_0 + \beta_0)}{\sqrt 2}|0\rangle + \frac{(\alpha_0 - \beta_0)}{\sqrt 2}|1\rangle 
\end{align}
is the encoded initial state. 
In the same way as detailed in the proof for Theorem\,\ref{Th1}, we can write the retrieved state at vertex $R$ as, 
\begin{align}
\label{eq19}
|\Psi^{R}_p(t )\rangle =& \sum_{j= -t}^{t}  \Big[ \alpha_{j, t} |0\rangle   +  \beta_{j, t} |1\rangle \Big ],
\end{align}
and with the backward iterative approach we get
\begin{align}
\label{Eq1Th2}
  |\Psi^R_p &(t)\rangle  = \nonumber \\
  &  \;\;        \sum_{j= -t}^{t} \Bigg[\Big ( \cos(\theta_t) \alpha_{j+1, t-1} -i\sin(\theta_t) \beta_{j+1, t-1}\Big ) |0\rangle  \nonumber \\
        & \;\;  + \Big ( \cos(\theta_t) \beta_{j-1, t-1} -i\sin(\theta_t) \alpha_{j-1, t-1} \Big ) |1\rangle \Bigg ].
\end{align}
Following the same procedure as in  Eqs.\,(\ref{eq:13}), \,(\ref{eq:14}) and \,(\ref{eq:15}) in Theorem\,\ref{Th1} leads to,
\begin{align}
|\Psi^{R}_p(t)\rangle 
&=   e^{-i \theta_t \cdot \sigma_x} |\Psi^{R}_p(t-1)\rangle,
\end{align}
and after iterating this for a further $t-1$ times we obtain
\begin{align}
|\Psi^{R}_p(t)\rangle 
&=   e^{-i \theta_t \cdot \sigma_x} |\Psi^{R}_p(t-1)\rangle = e^{-i \Theta \cdot \sigma_x}  |\Psi_p^{in} \rangle_E,
\end{align}
where $\Theta = \sum_{k=1}^{t} \theta_k$.
This can be simplified as
\begin{align}
|\Psi^{R}_p(t)\rangle  =&  \frac{\cos(\Theta)}{\sqrt 2}  \Big[ \alpha_0 (|0\rangle + |1\rangle ) + \beta_0 (|0\rangle  - |1\rangle) \Big] \nonumber \\
 - &\frac{i\sin(\Theta)}{\sqrt 2}  \Big[ \alpha_0 ( |0\rangle + |1\rangle ) - \beta_0 (|0\rangle  -  |1\rangle) \Big] \nonumber  \\
= e^{-i \Theta} & \alpha_0 \Bigg (\frac{|0\rangle + |1\rangle}{\sqrt 2} \Bigg ) + e^{i\Theta} \beta_0 \Bigg (\frac{|0\rangle - |1\rangle}{\sqrt 2} \Bigg ),
                                          \end{align}
and the final state after decoding using $H$ is
\begin{align}
|\Psi^{f}_p(t)\rangle = H  |\Psi^{R}_p(t)\rangle =   \begin{bmatrix}  
  ~e^{-i \Theta }     &       0
  \\ 0 &     ~e^{i \Theta} 
\end{bmatrix} |\Psi^{in}_p\rangle.
\end{align}
This validates the statement of the theorem $\forall ~t$.
\hfill\qed
\end{proof}

In Fig.\,\ref{fig:4}(a), we  show numerical results for the probability distribution of the standard quantum walk evolution ($\theta = \pi/4$, broken lines) and the walk with temporal disordered evolution for different number of steps (full lines). Localization of the probability distribution around the initial position is seen for disordered evolution and the inset shows that the widths of the localized distributions do not significantly increase for larger numbers of steps. The probabilities of finding the basis states $|0\rangle$ and $|1\rangle$ of the initial state in the final state after retrieval ($|\Psi_p^{R}\rangle$) and after decoding ($|\Psi_p^{f}\rangle$) are shown in Figs.\,\ref{fig:4}(b) and (c) as function of $\delta$ (for $\eta =0$) and $\eta$, respectively, for all $t$. For the retrieved state a strongly fluctuating distribution is seen, which nevertheless allows perfect retrieval of initial state after decoding. These numerical observations are consistent with Theorem \ref{Th2} and show the potential of localized quantum walks as a secured quantum memory.

The existence of symmetries in the quantum walk is the reason for the possibility of obtaining the global phase factor, $\Theta$, in Theorem\,\ref{Th2}. This can be seen in the proof where during each backward iteration step \,(Eq.\,(\ref{Eq1Th2})) a symmetry in the contribution from neighbouring positions to position $j$ is preserved. Because of this symmetry we obtain $\Theta = \sum \theta_k \approx  0$. This symmetry, however, would be broken in the presence of spatial disorder, in which case the initial state cannot be retrieved using any known techniques.  Therefore, the use of temporal disorder is fundamentally important in the scheme presented above.

Let us finally comment on the possibilities of a physical realization of scheme presented above. As the crucial requirement is to be able to implement coin operations which can have any value for $\theta$, the recently realised quantum walks using an optically trapped cold atom as the qubit\,\cite{K09} are one system that holds this possibility: temporally disordered coin operations can be realized by replacing the $\pi/2$ pulses used in the standard evolution by pulses of random length varying between $-\pi/2$ and $\pi/2$ at each step. This will lead to localisation of the atom in the optical lattice around the initial position and therefore securely store the qubit. The information can then be retrieved at any desired time by turning off the optical potential and letting the contributions from different lattices sites interfere. However, in principle this physically demonstrates the proposed protocol for quantum memory, the spatial nodes $|j\rangle$ stay too close to each other compromising the security of the scheme. Advancement in experimental techniques and carefully designed protocols in other physical systems could lead to, a wide separation of the spatial modes implementing a secured quantum memory practically effective.

\section{Conclusions} 
\label{conclusion}
In conclusion, we have suggested the use of quantum walks as quantum memory and shown that the initial state of any unknown qubit can be retrieved after a quantum walk evolution. However, if the evolution is a standard one, the retrieval time $t$ is a periodic function of the coin value $\theta$, and the size of the position space to store the information increases linearly with time, making it an experimentally difficult task to keep the information for longer times. 

To address both these shortcomings, we have shown that by using a Hadamard operation to encode and decode the initial state and the retrieved state, respectively, the stored information can be read out perfectly at any stage of the discrete walk.  To curb the linear growth of the position space for longer storage times, we have suggested to localize the quantum walk using temporal disorder and shown that perfect recovery is still possible.  Since the localized distribution will still have a finite width in position space, this maintains the inherent security of the protocol against a simple eavesdropping attack. 

Finally, let us point out that our scheme is trivially consistent with the requirement of a quantum memory in a quantum computer to have a structure that  facilitates controlled quantum evolutions\,\cite{PK11}.  However, our memory's abilities for information theoretic cooling and fault-tolerant operation are still to be explored in future work. As our scheme can be realized using random coin values  $\theta$ at each step of the temporal evolution (as shown in Theorem\,\ref{Th2}),  precise control over the coin parameter is not required. This significantly lowers the barrier for experimental implementation.


\end{document}